\newtheorem{thm}{Theorem}[section]
\newtheorem{lem}[thm]{Lemma}
\newtheorem{prop}[thm]{Proposition}
\theoremstyle{definition}
\newtheorem{rem}[thm]{Remark}
\newtheorem{ex}{Example}
\newtheorem{protocol}{Protocol}
\newtheorem{problem}{Problem}
\begin{document}

\title{Group key management based on semigroup actions \thanks{The Research was supported in part by the Swiss National Science
Foundation under grant No.\@ 149716. First author is partially
supported by Ministerio de Educacion, Cultura y Deporte grant
``Salvador de Madariaga'' PRX14/00121, Ministerio de Economia y
Competitividad grant MTM2014-54439 and Junta de Andalucia (FQM0211).
The last author is supported by Armasuisse.
} } 

\author{
J.A.~L\'opez-Ramos\footnote{University of Almeria}, J.~Rosenthal\footnote{University of Zurich}, D.~Schipani\footnotemark[3], R.~Schnyder\footnotemark[3]
}

\maketitle

\begin{abstract}
In this work we provide a suite of protocols for group key
management based on general semigroup actions. Construction of the key
is made in a distributed and collaborative way. Examples are provided that may
in some cases enhance the security level and communication overheads of
previous existing protocols. Security
against passive attacks is considered and depends on the hardness of the semigroup action problem in any particular scenario.
\end{abstract}

2000 Mathematics Subject Classification: 11T71, 68P25, 94A60

\section{Introduction}

Traditional cryptographic tools for key exchange may not be
useful when the communication process is carried out in a group of
nodes or users. There exist several approaches for group key
management, which may be divided into three main classes
\cite{rafaeli}:
\begin{itemize}
    \item \emph{centralized} protocols, where a single entity is in
        charge of controlling the whole group, minimizing storage
        requirements, computational power on both the client and server
        side and communication overheads,
    \item \emph{decentralized}, where a large group is divided into
        subgroups in order to avoid concentrating the workload in a single
        point,
    \item \emph{distributed}, where key generation is carried out in a
        distributed and collaborative way.
\end{itemize}
This last class of approaches has become particularly
important since the emergence of ad hoc networks, where a set of
nodes,
possibly consisting of light and mobile devices, create, operate and
manage a network, which is therefore solely dependent on the
cooperative and trusting nature of the nodes.
Moreover the limited capacity of the involved devices imposes both
key storage and computational requirements. Such a network is commonly
created to meet an immediate demand and specific goal, and nodes are
continuously joining or leaving it. Thus, group key management
based on distributed and collaborative schemes has proved to be of
great interest (cf.\@ for instance \cite{vandenmerwe} and its
references).

One of the most cited approaches in the distributed
setting is due to Steiner et al.\@ in \cite{steiner1} and
\cite{steiner2}. In these works the authors provide two different
group key management schemes that extend the traditional
Diffie-Hellman key exchange~\cite{diffie} and feature very efficient
rekeying procedures.

In \cite{maze}, the authors generalize the aforementioned classical
Diffie-Hellman key exchange to arbitrary group actions:

\begin{protocol}[Semigroup Diffie-Hellman Key Exchange]
Let
$S$ be a finite set, $G$ an abelian semigroup, and $\Phi: G\times S \to S$ a
$G$-action on $S$. The semigroup Diffie-Hellman key exchange in $(G,
S, \Phi)$ is the following protocol:

\begin{enumerate}

\item Alice and Bob publicly agree on an element $s\in S$.

\item Alice chooses $a\in G$ and computes $\Phi (a,s)$. Alice's private key is
$a$, her public key is $\Phi (a,s)$.

\item Bob chooses $b\in G$ and computes $\Phi (b,s)$. Bob's private key is $b$, his
public key is $\Phi (b,s)$.

\item Their common secret key is then $$\Phi (a,\Phi (b,s)) = \Phi( ab,s) = \Phi (ba ,s) =
\Phi (b,\Phi (a,s)).$$

\end{enumerate}
\end{protocol}

In the original Diffie-Hellman proposal, if an adversary is able
to solve the so-called Discrete Logarithm Problem (DLP), then she is
able to break the Diffie-Hellman key exchange. In this setting we can
analogously consider the following more general problem:

\begin{problem}[Semigroup Action Problem, SAP]
Given a
semigroup $G$ acting on a set $S$ and elements $x,y\in S$, find
$g\in G$ such that $\Phi (g,x) = y$.
\end{problem}

It is clear that if an adversary, Eve, finds a $g \in G$
such that $\Phi (g, s) = \Phi (a,s)$, then she can find the shared
secret by computing $\Phi (g, \Phi (b,s)) = \Phi (gb, s) = \Phi (bg,
s) = \Phi (b, \Phi (a, s))$.

We can say that the security of the preceding protocol is equivalent
to the following problem.

\begin{problem}[Diffie-Hellman Semigroup Action Problem, DHSAP]
Given a finite abelian semigroup $G$ acting on a finite set
$S$ and elements $x, y, z \in S$ with $y = \Phi (g , x)$ and $z =
\Phi (h , x)$ for some $g, h \in G$, find $\Phi (gh, x)$.
\end{problem}

Although, as noted above, solving the SAP implies solving the
DHSAP, we do not know if both problems are (in general)
equivalent, just like in the traditional setting of Diffie-Hellman, where however some equivalence results for particular scenarios are known \cite{maurer}.

\medskip

Motivated by the above, our idea is now to define extensions of the
semigroup Diffie-Hellman key exchange protocol to $n$ users,
by first generalizing those introduced in \cite{steiner1} and \cite{steiner2},
and then considering other settings, which can feature more favorable
characteristics compared to the original protocol.
Since the capability of devices is often limited and authentication
processes may be difficult to implement in a distributed network, we
focus our attention on confidentiality under passive attacks. As in
\cite{maze}, some non-standard settings are introduced as more general
examples, although the hardness of the SAP there may not be proven
yet, so the security of the protocols in those cases is conditional on
that.

The structure of the paper is as follows. In Section 2 we consider a
suite of three protocols for group key management based on one-sided
actions. While these naturally extend the results of \cite{steiner1} and
\cite{steiner2}, we consider different settings for a general
semigroup action.
Section 3 considers the security of the preceding protocols against passive
attacks, including forward and backward secrecy.
Finally, in Section 4, we introduce two protocols based on linear
actions, i.e.\@ semigroup actions on other groups satisfying a certain
distributivity property. We give two different group key protocols
in this setting, one of which runs very efficiently in only two
rounds, independently of the number of members in the communicating
group.

\medskip

Throughout this paper we will consider a group of $n$ users,
${\cal U}_1, \dots , {\cal U}_n$, who would like to share a secret element of a
finite set $S$, and $G$ will denote a finite abelian semigroup acting
on $S$.

\section{Group key communication based on one-sided actions}

In this section we consider three different extensions of the
semigroup Diffie-Hellman key exchange with different computing
requirements and communication overheads, but with possible
applications in different cases. They are natural extensions of  \cite{steiner1} and
\cite{steiner2}. For completeness we report proofs in appendix to show soundness of the schemes.

\subsection{A sequential key agreement}\label{Sec11}

The first approach to extend the key exchange protocol consists of a
sequence of messages, built using pieces of private information, along
a chain of users and an analogous second sequence of messages in
the opposite way. Therefore every user will send and receive two
messages except for the user that initiates the communication and the
last user receiving the sequence of messages.

The protocol is defined by the following steps.

\begin{protocol}[GSAP-1]
Users agree on an element $s$ in a finite set $S$, a finite abelian
semigroup $G$, and a $G$-action on $S$ given by $\Phi$. For every
$i=1, \dots ,n$, the user ${\cal U}_i$ holds a private element
$g_i\in G$.

\begin {enumerate}
\item For $i = 1,\ldots, n - 1$, user ${\cal U}_i$ sends to user ${\cal U}_{i + 1}$ the message
$$\{ C_1,\dots ,C_i\} = \Big \{ \ \Phi (g_1 , s), \ \Phi (g_2g_1 ,
s),\dots , \Phi \Big ( \prod_{j=1}^{i}g_j  , s \Big)  \ \Big \}.$$

\item User ${\cal U}_n$ computes \ \ $\Phi (g_n , C_{n-1} )$.

\item For $k = n,\dots,2$, user ${\cal U}_k$ sends to user ${\cal U}_{k-1}$
the message $\big \{ f_1^k,\dots,f_{k-1}^k \big \}$, where $f_j^k = \Phi (g_k ,
f_j^{k+1} )$ for $2\leq k \leq n-1$ and  $f_j^n = \Phi (g_n ,
C_{j-1} )$, $j = 1,\dots ,n-1$, with $C_0=s$.

\item User ${\cal U}_k$ computes $\Phi (g_k , f_k^{k+1})$.
\end{enumerate}
\end{protocol}

\subsection{A key agreement in broadcast}\label{Sec12}

The following protocol presents a lower communication overhead than
GSAP-1. The idea is again to get a first sequence of messages
from user ${\cal U}_1$ to user ${\cal U}_n$, but now ${\cal U}_n$
will broadcast a message that allows the rest of the users to recover
the common key.

\begin{protocol}[GSAP-2]
Users agree on an element $s$ in a finite set $S$, a finite abelian
semigroup $G$, and a $G$-action $\Phi$ on $S$. For every $i=1,
\dots ,n$, the user ${\cal U}_i$ holds a private element $g_i\in
G$.

\begin {enumerate}
\item For $i =1, \dots ,n-1$, user ${\cal U}_i$ sends to user ${\cal
U}_{i+1}$ the message $$ \big \{ C_{i-1}^{i-1}, C_1^i,\ldots,C_i^i \big
\},$$
\noindent where \ $C_0^0 = s$, \ $C_1^1 = \Phi (g_1, s)$, and for $i
\geq 2$, $C_1^i = \Phi (g_i, C_{i-2}^{i-2})$, \ $C_j^i = \Phi (g_i,
C_{j-1}^{i-1})$ (with $j = 2, \dots ,i$).
\item User ${\cal U}_n$ computes \ \ $\Phi (g_n, C_{n-1}^{n-1})$.
\item User ${\cal U}_n$ broadcasts $ \big \{ f_1^n,\ldots,f_{n-1}^n,
f_n^n \big \}$,
where $f_i^n = \Phi (g_n, C_{n-1-i}^{n-1})$ for $i = 1, \dots ,n-2$,
$f_{n-1}^n = \Phi (g_n, C_{n-2}^{n-2})$ and $f_n^n=C_{n-1}^{n-1}$.
\item User ${\cal U}_i$ computes \ \ $\Phi (g_i , f_i^n)$.
\end {enumerate}
\end{protocol}

\begin{rem}
It can be observed that the element $f_n^n$ contained in the
broadcast message in step 3 of Protocol GSAP-2, is not needed by
any of the users ${\cal U}_i$, $i=1, \ldots ,n-1$ to recover the
shared key. However, the distribution of this value is required for
future rekeying operations, as we will later show.
\end{rem}

\subsection{Examples}\label{exgsap2}

a) The two previous protocols are extensions of those introduced in
\cite{steiner1} and \cite{steiner2} for the action of the
multiplicative semigroup $\mathbb{N}^*$ on a cyclic group $S$ of
order $q$ generated by $g$, given by $\Phi(y,g^x)=(g^x)^y$.
It was pointed out that the first protocol presents excessive
communication overheads mainly due to both the number of rounds and
messages to be sent. Because of this, only the second one, referred to as
IKA.1 in \cite{steiner2}, was recommended. However, the first protocol
could be interesting on its own when applied to a sensor network whose
communications need to be secure and where it should be assessed
whether every node is working properly. After user ${\cal U}_n$
receives the message in step 1, the absence of any of the
messages (excepting the last one) in the descending chain of rounds
would alert that the corresponding sender node is not working or the
communication was interrupted.
\medskip

\noindent b) In particular, consider a finite field $GF(q)$ and an
element $g$ of prime order.  The semigroup $\mathbb{N}^*$ acts on
the subgroup $\langle g \rangle \subset GF(q)^*$ by
$\Phi(y,g^x)=(g^x)^y$ for $x,y \in \mathbb{N}^*$.

\medskip

\noindent c) Let $\varepsilon$ be the set of points in an elliptic
curve. Then the action $\Phi : \mathbb{N}^* \times \varepsilon
\rightarrow \varepsilon$ given by $\Phi (n,P)=nP$ for every $n\in
\mathbb{N}^*$ and every $P\in \varepsilon$ provides the
corresponding versions of the preceding protocols for elliptic
curves. In \cite{quiuna} an implementation of the second protocol
can be found.

\medskip

\noindent d) In \cite[Example 5.13]{maze} the authors illustrate the
use of a semiring of order 6 to construct an example of a practical
SAP. This was later cryptanalyzed in \cite{steinwandt} due not to a
general attack, but rather to the structure of this ring. However, we
can use the semiring of order 20 given in \cite[Example 5.8]{maze}
to analogously define another SAP and its cryptanalysis is still an
open question. This shows an example where SAP does not coincide
with a traditional DLP on a semigroup and it is applicable to both
preceding protocols.

\medskip

\noindent e) In \cite[Protocol 80]{gnilke} the author defines a key
exchange protocol whose security is based on the SAP derived from
the following semigroup action: let $S$ be a semiring, $T$ a
finitely generated additive subsemigroup of $S$ and let
$\rm{End}_{+}(T)$ be its (additive) endomorphisms semigroup. Then
the semigroup action that defines the security of this protocol is
given by $\Phi : (S,T^{op})\times \rm{End}_{+}(T)\rightarrow
\rm{End}_{+}(T)$, $((s,t),f)\mapsto (x\mapsto s*f(x)*t)$.

\begin{rem}
Many examples of semigroup actions suitable to defining a
Diffie-Hellman type key exchange protocol can be found in
\cite{mazeth}. The corresponding SAP is shown to be computationally
equivalent to a DLP for some of them.
\end{rem}


\subsection{A key agreement given by a group action}\label{Sec13}

The existence of inverses in the semigroup $G$
acting on the set $S$ can provide a way to agree on a common key
with reduced communication overheads. Moreover, computations can be made more equally distributed among
the users. We remark that in the protocols given in the two
previous sections, these requirements are higher the further away the user is
from the one that initialized the protocol. 

Thus we assume that $G$ is a group. The protocol is given by the
following steps.

\begin{protocol}[GSAP-3]
Users agree on an element $C_0 = s$ in a finite set $S$, a finite abelian
group $G$, and a $G$-action $\Phi$ on $S$. For every $i=1,
\dots ,n$, the user ${\cal U}_i$ holds a private element $g_i\in
G$.

\begin{enumerate}
\item For $i=1, \dots ,n-2$, user ${\cal U}_i$ sends to user  ${\cal U}_{i+1}$ the message $C_i=\Phi (g_i, C_{i-1})$.

\item User ${\cal U}_{n-1}$ computes $C_{n-1}=\Phi (g_{n-1}, C_{n-2})$
and broadcasts it to the other users
$\{ {\cal U}_1 ,\dots ,{\cal U}_{n-2}, {\cal U}_n \}$.

\item User ${\cal U}_n$ computes the element $\Phi (g_n, C_{n-1})$.

\item  For $i=1, \dots ,n-1$, user ${\cal U}_i$ computes $D_i=\Phi (g_i^{-1} , C_{n-1})$ and sends it to user ${\cal U}_n$.

\item For $i=1, \dots ,n-1$, user ${\cal U}_n$ computes $\Phi (g_n, D_i)$
and sends to users $\{ {\cal U}_1 ,\dots ,{\cal U}_{n-2}, {\cal
U}_{n-1} \}$ the set of values $\{ \Phi (g_n, D_1), \dots , \Phi
(g_n, D_{n-1}), C_{n-1}  \}$.

\item For $i=1, \dots ,n-1$, user ${\cal U}_i$ computes $\Phi (g_i, \Phi
(g_n,D_i))$.
\end{enumerate}
\end{protocol}

After protocol GSAP-3, the users ${\cal U}_1, \dots ,{\cal U}_n$
share a common key given by $\Phi \Big( \displaystyle{\prod_{i=1}^n}
g_i , s\Big)$.
This follows easily from the commutativity of
$G$ and the fact that for every $g_i,g_j\in G$, $i,j=1, \dots ,n$ and
$s\in S$, we get that $\Phi (g_ig_j,s)= \Phi (g_i,\Phi (g_j,s))$.

\begin{rem}
As in Protocol GSAP-2, we also point out that the element
$C_{n-1}$, which is broadcast by ${\cal U}_n$ in step 5 of
Protocol GSAP-3, is needed only for future rekeying purposes.
\end{rem}

\begin{rem}  Using the action $\Phi (y, g^x) = (g^x)^y$ for
$x, y \in \mathbb{Z}_q^*$, with $g$ a generator of a cyclic group $S$
of order $q$, we get the third protocol introduced
in \cite{steiner1} and \cite{steiner2} and referred to as IKA.2 in
CLIQUES \cite{steiner2}. In this case, user ${\cal U}_i$ sends to
user ${\cal U}_n$ the message $g^{\prod_{j=1,j\not=i}^{n-1} x_j}$,
which is computed with the element $x_i^{-1} \ \mbox{mod} \ q$,
given that the $x_i$'s are selected either to be coprime with $q$ or,
as the authors suggest, $q$ is chosen to be a prime.

An elliptic curve version is clearly also feasible. An implementation in this sense can be found in \cite{quiuna}.

%
 \end{rem}


\section{Security of the key agreements and rekeying operations}\label{Secsec}

In \cite{maze} it was pointed out that if an adversary is able to
solve the SAP, then she will be able to break the
two party Diffie-Hellman key exchange, i.e.\@ solve the DHSAP. It is
easy to observe that
being able to solve the DHSAP allows getting the shared key
in all the protocols proposed above.

\begin{prop}\label{securityGSAP}
If an adversary is able to solve the DHSAP, then she can get
the shared key in GSAP-1, GSAP-2 and GSAP-3.
\end{prop}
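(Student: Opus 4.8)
The plan is to reduce, for each of the three protocols, the recovery of the common key to a single call to a DHSAP-solver. The key shared by $\mathcal{U}_1,\dots,\mathcal{U}_n$ is in every case $\Phi\bigl(\prod_{i=1}^n g_i,s\bigr)$ (stated explicitly for GSAP-3, and following for GSAP-1 and GSAP-2 from the soundness computations in the appendix). So it suffices to exhibit, among the publicly transmitted messages, two elements of the form $\Phi(a,s)$ and $\Phi(b,s)$ with $ab=\prod_{i=1}^n g_i$ in the abelian semigroup $G$: feeding $x=s$, $y=\Phi(a,s)$, $z=\Phi(b,s)$ to the DHSAP-solver returns $\Phi(ab,s)=\Phi\bigl(\prod_{i=1}^n g_i,s\bigr)$, the key. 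I would present this reduction once and then instantiate $a$ and $b$ in each protocol.

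For GSAP-1 and GSAP-3 the two complementary values are immediate. In GSAP-1 the ascending round transmits $C_{n-1}=\Phi\bigl(\prod_{j=1}^{n-1}g_j,s\bigr)$, and the descending return sequence contains $f_1^n=\Phi(g_n,C_0)=\Phi(g_n,s)$ because $C_0=s$; I take $a=\prod_{j=1}^{n-1}g_j$ and $b=g_n$. In GSAP-3 the first message is $C_1=\Phi(g_1,s)$, while the closing broadcast carries $\Phi(g_n,D_1)$; using $D_1=\Phi(g_1^{-1},C_{n-1})=\Phi\bigl(\prod_{j=2}^{n-1}g_j,s\bigr)$ and commutativity, $\Phi(g_n,D_1)=\Phi\bigl(\prod_{j=2}^{n}g_j,s\bigr)$, so I take $a=g_1$, $b=\prod_{j=2}^{n}g_j$. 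A single DHSAP call then yields the key in each case.

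For GSAP-2 the only real work is to unwind the doubly-indexed recursion in order to locate a suitable pair. First I would prove by induction on $i$ that $C_i^i=\Phi\bigl(\prod_{j=1}^i g_j,s\bigr)$, and more generally that $C_j^i=\Phi\bigl(\prod_{k=1,\,k\neq i-j}^{i}g_k,s\bigr)$, i.e.\ the action of the product of $g_1,\dots,g_i$ with the single factor $g_{i-j}$ omitted. Substituting this into $f_1^n=\Phi(g_n,C_{n-2}^{n-1})$ gives $f_1^n=\Phi\bigl(\prod_{j=2}^{n}g_j,s\bigr)$, the complementary product missing $g_1$. Since $C_1^1=\Phi(g_1,s)$ is sent in the very first round, I again take $a=g_1$, $b=\prod_{j=2}^{n}g_j$ and invoke the solver once.

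The DHSAP reduction itself is routine and common to all three protocols. The main obstacle is purely combinatorial bookkeeping confined to GSAP-2: verifying the closed form of the indexed quantities $C_j^i$ and hence that the broadcast element $f_1^n$ is exactly the complementary product $\Phi\bigl(\prod_{j\neq 1}g_j,s\bigr)$. Once these identities are established, the remaining argument is uniform and each protocol falls to a single query.
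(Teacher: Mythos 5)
Your proposal is correct and follows essentially the same argument as the paper: in each protocol one exhibits a publicly transmitted pair $\bigl(\Phi(a,s),\Phi(b,s)\bigr)$ with $ab=\prod_{i=1}^n g_i$ and makes a single DHSAP query. The only cosmetic difference is that for GSAP-1 you use the pair $\bigl(C_{n-1},f_1^n\bigr)=\bigl(\Phi(\prod_{j=1}^{n-1}g_j,s),\Phi(g_n,s)\bigr)$ while the paper uses $\bigl(C_1,f_1^2\bigr)=\bigl(\Phi(g_1,s),\Phi(\prod_{i=2}^n g_i,s)\bigr)$; both are valid, and your closed form for $C_j^i$ in GSAP-2 agrees with the appendix lemma.
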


\begin{proof}
This follows from the fact that the adversary
can access the pair of values
\begin{itemize}
    \item $(C_1, f_1^2) = \big( \Phi\big(g_1 , s \big), \Phi\big(
        \prod_{i=2}^n g_i, s \big) \big)$ in GSAP-1;
    \item $(C_1^1, f^n_1)=\big(\Phi\big(g_1 , s \big),
        \Phi\big(\prod_{i=2}^n g_i , s \big) \big)$ in GSAP-2;
    \item $\big(C_1, \Phi\big(g_n, D_1 \big)\big)= \big( \Phi\big(g_1 ,
        s \big), \Phi\big(\prod_{i=2 }^n g_j , s \big) \big)$ in GSAP-3.
\end{itemize}
\end{proof}

The preceding result shows, as could be expected, that the
multiparty key exchange protocols do not enhance the security that
the corresponding two-party protocol offers. However, as in
\cite{steiner1} and \cite{steiner2}, it is possible to show that
increasing the number of messages does not produce any information
leakage whenever the corresponding key exchange based on the SAP for
two communicating parties is secure. Here we are referring to security
against passive attacks; a totally different picture would arise if we
assume that the attacker can control communications from and to one or
more particular users, see e.g.\@ \cite{schnyder}.

\medskip

Let $X=\{ g_1, \dots ,g_n \}$ be a set of elements of the semigroup
$G$, $s$ an element of a set $S$ and $\Phi$ a $G$-action on $S$.
Let us define the (ordered) set of elements of $S$

$$V_{\Phi}^G(s,n,X)=\Big \{ \Phi
\Big( \prod_{j=i_1}^{i_m}g_j,s\Big) : \ \{ i_1,\dots ,i_m\} \subsetneq
\{ 1, \dots ,n\} \Big\}$$

\noindent and the value $K_{\Phi}^G(s,n,X)=\Phi \Big(
\prod_{j=1}^{n}g_j,s\Big)\in S$.

We point out that the messages that any adversary observes in any of
the protocols is a subset of $V_{\Phi}^G(s,n,X)$, and the key that
the users agree on is precisely $K_{\Phi}^G(s,n,X)$. Let us assume now
that $\Phi$ is a transitive action, i.e., for every pair of elements
$s,s'\in S$ there always exists a $g\in G$ such that $\Phi (g,s)=s'$.
Thus if $s\in S$ is a public element, given any two elements in $S$,
$s_1$, $s_2$, there always exist $g_1, g_2\in G$ such that $\phi
(g_i,s)=s_i$, $i=1,2$. Let $s_3=\Phi (g_1, \Phi
(g_2,s))=\Phi (g_1g_2,s)$. If, given $s$, $s_1$ and $s_2$, it is not
feasible to distinguish $s_3$ from a random value in polynomial time,
then an induction argument like that given in \cite[Theorem 1]{steiner2}
allows us to show the following result.

\begin{thm}\label{thmsec}
Let $\Phi$ be a transitive $G$-action on $S$. Then the group key
that users derive as a result of any of the protocols GSAP-1, GSAP-2
and \mbox{GSAP-3} is indistinguishable in polynomial time from a
random value, given only the values exchanged between users during the
protocol,
whenever the corresponding Diffie-Hellman protocol induced by $\Phi$
for two users satisfies this property.
\end{thm}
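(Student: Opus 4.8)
The plan is to follow the suggested route and argue by induction on the number of users $n$, with the base case $n=2$ being exactly the hypothesis on the induced two-party protocol. Throughout I regard $n$ as fixed, so that the set $V_\Phi^G(s,n,X)$ has a bounded number of elements and all the reductions below run in time polynomial in the size of the underlying structure. Since the paper has already observed that in each of GSAP-1, GSAP-2 and GSAP-3 the adversary's view is a subset of $V_\Phi^G(s,n,X)$ while the agreed key is always $K_\Phi^G(s,n,X)$, it is enough to prove the stronger statement that $K_\Phi^G(s,n,X)$ is indistinguishable from a uniform element of $S$ given the whole of $V_\Phi^G(s,n,X)$, because any successful distinguisher from the smaller protocol view immediately yields one from the full set by ignoring the surplus values.

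First I would record the recursive structure that drives the induction. Put $V_{n-1}=V_\Phi^G(s,n-1,\{g_1,\dots,g_{n-1}\})$ and $K_{n-1}=\Phi(\prod_{j=1}^{n-1}g_j,s)$, and split the proper subsets of $\{1,\dots,n\}$ according to whether they contain the index $n$. Using $\Phi(g_ig_j,s)=\Phi(g_i,\Phi(g_j,s))$ one obtains
$$V_n = V_{n-1}\cup\{K_{n-1}\}\cup\big\{\Phi(g_n,t):t\in V_{n-1}\big\},\qquad K_n=\Phi(g_n,K_{n-1}),$$
where $V_n$ abbreviates $V_\Phi^G(s,n,X)$ and $K_n=K_\Phi^G(s,n,X)$. (I would also verify that the inverse-based values of GSAP-3 cause no trouble, since $\Phi(g_n,D_i)=\Phi(\prod_{j\ne i}g_j,s)$ is again indexed by a proper subset and hence already lies in $V_n$.) This identity exhibits the $n$-user view as the $(n-1)$-user view, enlarged by the single value $K_{n-1}$ and by the images under the fresh secret $g_n$ of everything previously seen.

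Next I would interpolate between $(V_n,K_n)$ and $(V_n,R)$ with $R$ uniform, through the hybrids $H_0=(V_n,K_n)$, $H_1=(V_n',\Phi(g_n,\tilde s))$, $H_2=(V_n',R')$ and $H_3=(V_n,R)$, where $V_n'$ denotes $V_n$ with $K_{n-1}$ replaced by a uniform $\tilde s$, and $R,R'$ are independent uniform values. The transitions $H_0\approx H_1$ and $H_2\approx H_3$ are handled by the induction hypothesis: from a challenge $(V_{n-1},c)$, with $c$ equal to $K_{n-1}$ or uniform, the reduction samples its own $g_n$, assembles $V_{n-1}\cup\{c\}\cup\{\Phi(g_n,t):t\in V_{n-1}\}$ together with the appropriate key, and invokes the assumed $n$-user distinguisher; since $g_n$ is chosen by the reduction, the images $\Phi(g_n,t)$ are computed directly. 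The transition $H_1\approx H_2$, where $\tilde s$ is already uniform, is reduced to the two-party assumption: on input $(s,\Phi(a,s),\Phi(b,s),Z)$ the reduction implicitly sets $g_n=a$ and $\tilde s=\Phi(b,s)$, so that $Z=\Phi(ab,s)=\Phi(g_n,\tilde s)$ is the candidate key, real or random according to the challenge bit.

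The step I expect to be the crux is this last reduction, because the $n$-user view discloses $\Phi(g_n,\cdot)$ evaluated at the whole family $V_{n-1}$, whereas the bare two-party problem supplies only the single image $\Phi(a,s)$. The resolution is that the reduction itself generated $g_1,\dots,g_{n-1}$ and therefore knows, for each $t\in V_{n-1}$, an exponent $w=\prod_{j\in I}g_j$ with $t=\Phi(w,s)$; commutativity of $G$ then yields $\Phi(g_n,t)=\Phi(a,\Phi(w,s))=\Phi(w,\Phi(a,s))$, which is computable from the public $\Phi(a,s)$ alone. Hence every auxiliary image is simulated without knowledge of $a$, and transitivity guarantees that $\tilde s=\Phi(b,s)$ is a genuinely uniform element of $S$, matching the meaning of ``random value'' in the statement. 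Composing the four hybrids, with each gap bounded by the inductive advantage or by the two-party distinguishing advantage, closes the induction and proves the theorem.
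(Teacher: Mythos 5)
Your proof is correct and follows precisely the route the paper intends: the paper gives no written proof of Theorem~\ref{thmsec} at all, only the remark that an induction argument as in \cite[Theorem 1]{steiner2} applies, and your hybrid argument is a complete execution of that induction, including the decomposition $V_n=V_{n-1}\cup\{K_{n-1}\}\cup\{\Phi(g_n,t):t\in V_{n-1}\}$ and the key observation that the reduction can simulate $\Phi(g_n,t)$ from the single value $\Phi(g_n,s)$ because it knows a word $w$ with $t=\Phi(w,s)$ and $G$ is abelian. The one imprecision is your claim that transitivity makes $\tilde s=\Phi(b,s)$ uniform: transitivity only gives surjectivity of $\Phi(\cdot,s)$, not uniformity of the pushforward of the uniform distribution on $G$; since the paper phrases its two-party hypothesis for arbitrary $s_1,s_2\in S$, a uniform $\tilde s$ is still covered and correctness is unaffected, but the sentence should be reworded.
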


Another important issue in any group key management is rekeying
after the initial key agreement. There exist three different
situations that require a rekeying operation. The first is
simply due to key caducity and the group of users remains the same. In
the other two cases, we may find a new user that wishes to join the
group or a user who leaves the group. In both situations it is
required that the new (resp.\@ former) user cannot access the former
(resp.\@ new) distributed key. In the following lines we describe the
procedures as well as their security.

\medskip

Let us start by considering the protocol GSAP-1 described in Section
\ref{Sec11}. In this case, we could simply
require that a new initial key agreement is needed. However, we may
shorten the rekeying process, keeping somehow the spirit of the
protocol. If rekeying is due to key caducity, then user ${\cal U}_n$
chooses a new private element $g'_n\in G$ and defines a new sequence
$f_j^n = \Phi (g'_ng_n , C_{j-1} )$, $j = 1,\dots ,n-1$, with
$C_0=s$, as is done in step 3 of GSAP-1. The rest of the users also
proceed as in step 3 and recover
(using their private keys as described in GSAP-1) the new key
$ \Phi \Big(g'_n\prod_{j=1}^{n}g_j , s \Big )$.

In case some user, say ${\cal U}_i$, leaves the group, then
the corresponding value $f_i^n$ is omitted in the new message made
by ${\cal U}_n$.

Finally, in case a user ${\cal U}_{n+1}$ joins the group, then
user ${\cal U}_n$ chooses a new element $g'_n$ and
sends the message
$$\Big \{ \ \Phi (g_n'g_1 , s), \ \Phi (g_n'g_2g_1 , s),\dots , \Phi \Big ( g'_n\prod_{j=1}^{n}g_j  , s \Big)  \ \Big \}$$
to user ${\cal U}_{n+1}$. Then this user starts step 3 of GSAP-1.

Security of all new subsequent key distributions follows from Theorem
\ref{thmsec}.

\medskip

In the case of protocols GSAP-2 and GSAP-3, described in Sections
\ref{Sec12} and \ref{Sec13} respectively, we may use the information
that every user holds after the initial key agreement to rekey very
efficiently as is suggested in \cite{steiner2}. In this case, given
that every user remembers the same information, say
$$\Big\{ \Phi \Big ( \displaystyle{ \prod_{r=2}^{n}} g_r ,s\Big), \Phi \Big ( \displaystyle{ \prod_{r=1; r \neq 2}^{n}} g_r ,s\Big), \dots ,
\Phi \Big ( \displaystyle{ \prod_{r=1; r \neq c}^{n}} g_r,s\Big),
\dots ,\Phi \Big ( \displaystyle{ \prod_{r=1}^{n-1}} g_r ,s\Big)
\Big\},$$

\noindent the rekeying process may be carried out by any one of
them. Let us call this user ${\cal U}_c$. If rekeying is due to key
caducity, then he chooses a new $g'_c\in G$, changes his private
key to $g'_cg_c$ and sends the following rekeying message:

$$\Big\{ \Phi \Big ( g'_c\displaystyle{ \prod_{r=2}^{n}} g_r ,s\Big), \Phi \Big ( g'_c\displaystyle{ \prod_{r=1; r \neq 2}^{n}} g_r ,s\Big), \dots ,
\Phi \Big ( \displaystyle{ \prod_{r=1; r \neq c}^{n}} g_r,s\Big),
\dots ,\Phi \Big ( g'_c\displaystyle{ \prod_{r=1}^{n-1}} g_r ,s\Big)
\Big\}.$$

Then every user, using his private information, recovers the new
common key given by $\Phi \Big ( g'_c\displaystyle{ \prod_{r=1}^{n}}
g_r ,s\Big)$.

In case some user leaves the group, the corresponding position
in the rekeying message is omitted. If a new user joins the group,
then ${\cal U}_c$ adds the element $\Phi \Big ( g'_c\displaystyle{
\prod_{r=1}^{n}} g_r ,s\Big)$ and sends the following to the new user
${\cal U}_{n+1}$:

$$\Big\{ \Phi \Big ( g'_c\displaystyle{ \prod_{r=2}^{n}} g_r ,s\Big),  \dots ,
\Phi \Big ( \displaystyle{ \prod_{r=1; r \neq c}^{n}} g_r,s\Big),
\dots ,\Phi \Big ( g'_c\displaystyle{ \prod_{r=1}^{n-1}} g_r
,s\Big), \Phi \Big ( g'_c\displaystyle{ \prod_{r=1}^{n}} g_r ,s\Big)
\Big\}.$$

\noindent This user proceeds (in both protocols GSAP-2 and
GSAP-3) to step 5 of protocol GSAP-3.

Again, security in every case is a consequence of Theorem
\ref{thmsec}.

\section{Secure group communication based on linear actions}

As can be observed in the protocols given in the previous
section, user ${\cal U}_n$ plays a central role, and in two of them,
every user is required to do a different number of
computations and store a different number of values,
depending on his proximity to ${\cal U}_n$. The aim of this section
is twofold. On one hand, we give a similar approach to that of GSAP-3
in order to get a protocol with the same advantages
that is applicable in situations where the semigroup $G$ acting on $S$
does not contain inverses. On the other hand, we give a new approach
based on linear actions that in some cases not only significantly
decreases communication overheads, but also reduces the number of
rounds to just 2, which will significantly enhance the efficiency.

\medskip

We say that, given $G$ and $S$ semigroups, an action $\Phi:
G\times S \rightarrow S$ defined by $\Phi (g,s)=g\cdot s$ is linear in
case $\Phi(g,ss')=\Phi(g,s)\Phi (g,s')$.

\medskip

The following protocol is a modification of GSAP-3 for a linear
$G$-action $\Phi$ on $S$, but instead of requiring $G$ to be a group,
we require this of $S$. We get a similar
protocol that is also an extension of Diffie-Hellman to the
multiparty case.

\begin{protocol}[GSAP-3']
Users agree on an element $s$ in a finite group $S$, a finite abelian
semigroup $G$, and a linear $G$-action $\Phi$ on $S$. For every $i=1,
\dots ,n$, the user ${\cal U}_i$ holds a private element $g_i\in
G$.

\begin{enumerate}
\item For $i=1, \dots ,n-2$, user ${\cal U}_i$ sends to user  ${\cal
U}_{i+1}$ the message $C_i=\Phi (g_i, C_{i-1})$.

\item User ${\cal U}_{n-1}$ computes $C_{n-1}=\Phi (g_{n-1}, C_{n-2})$
and broadcasts it to the other users
$\{ {\cal U}_1 ,\dots ,{\cal U}_{n-2}, {\cal U}_n \}$.

\item User ${\cal U}_n$ computes the element $\Phi (g_n, C_{n-1})$.

\item  For $i=1, \dots ,n-1$, user ${\cal U}_i$ computes
$D_i=\Phi(g_i, s)^{-1}C_{n-1}$ and sends it to user ${\cal U}_n$.

\item For $i=1, \dots ,n-1$, user ${\cal U}_n$ computes $\Phi (g_n, D_i)$
and sends to users $\{ {\cal U}_1 ,\dots ,{\cal U}_{n-2}, {\cal
U}_{n-1} \}$ the set of values $\{ \Phi (g_n, D_1),\ldots , \Phi
(g_n, D_{n-1}),\linebreak \Phi (g_n, D_n) \}$ and his public key $\Phi
(g_n,s)$, where $D_n=\Phi (g_n,s)^{-1}C_{n-1}$.

\item For $i=1, \dots ,n-1$, user ${\cal U}_i$ computes $\Phi (g_i, \Phi
(g_n,s)) \Phi (g_n, D_i)$.
\end{enumerate}
\end{protocol}

\begin{thm}\label{theoGSAP3'}
After protocol GSAP-3', the users ${\cal U}_1, \dots ,{\cal U}_n$
share a common key given by $\Phi \Big( \displaystyle{\prod_{i=1}^n}
g_i , s\Big)$.
\end{thm}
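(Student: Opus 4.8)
The plan is to verify that the quantity each user computes in step 6 collapses to the common value $\Phi\big(\prod_{i=1}^n g_i, s\big)$, using the linearity of $\Phi$ together with the commutativity of $G$. First I would unwind the definitions along the ascending chain: by induction on step 1, $C_i = \Phi\big(\prod_{j=1}^i g_j, s\big)$ for $i = 1,\dots,n-1$, which uses only the action axiom $\Phi(gh,s)=\Phi(g,\Phi(h,s))$ and commutativity to order the product. In particular $C_{n-1} = \Phi\big(\prod_{j=1}^{n-1} g_j, s\big)$.

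Next I would compute what user ${\cal U}_i$ actually holds after step 6. The key step is to simplify $\Phi(g_n, D_i)$ where $D_i = \Phi(g_i,s)^{-1}C_{n-1}$. Here is where linearity does the work: since $\Phi(g_n, -)$ is a semigroup homomorphism on $S$, we get
\begin{equation*}
\Phi(g_n, D_i) = \Phi\big(g_n, \Phi(g_i,s)^{-1}\big)\,\Phi(g_n, C_{n-1}).
\end{equation*}
I would note that a linear action of a semigroup on a group respects inverses, so $\Phi\big(g_n,\Phi(g_i,s)^{-1}\big) = \Phi\big(g_n,\Phi(g_i,s)\big)^{-1} = \Phi(g_ng_i,s)^{-1}$. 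Using $\Phi(g_n,C_{n-1}) = \Phi\big(g_n \prod_{j=1}^{n-1} g_j, s\big) = \Phi\big(\prod_{j=1}^n g_j, s\big)$, this gives a clean closed form for the broadcast value.

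Finally I would multiply by the factor $\Phi(g_i,\Phi(g_n,s)) = \Phi(g_ig_n,s)$ that ${\cal U}_i$ prepends in step 6 and check that the two offending factors cancel. Explicitly,
\begin{equation*}
\Phi\big(g_i,\Phi(g_n,s)\big)\,\Phi(g_n, D_i) = \Phi(g_ig_n,s)\,\Phi(g_ng_i,s)^{-1}\,\Phi\Big(\prod_{j=1}^n g_j, s\Big),
\end{equation*}
and by commutativity of $G$ we have $g_ig_n = g_ng_i$, so the first two factors are mutual inverses in $S$ and cancel, leaving exactly $\Phi\big(\prod_{j=1}^n g_j, s\big)$, independent of $i$. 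The same reasoning applied to the term $\Phi(g_n,D_n)$ shows ${\cal U}_n$ recovers the identical value. The main obstacle to watch is the interchange of $\Phi(g_n,-)$ with the inverse operation in $S$: I would make sure to justify that linearity plus the group structure on $S$ forces $\Phi(g_n, x^{-1}) = \Phi(g_n,x)^{-1}$ (equivalently that $\Phi(g_n,-)$ sends the identity of $S$ to the identity), rather than taking it for granted, since this is the one point where the hypotheses on $S$ and on linearity are genuinely used.
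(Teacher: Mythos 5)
Your proposal is correct and follows essentially the same route as the paper: split $\Phi(g_n, D_i)$ by linearity, use that linearity forces $\Phi(g_n,\cdot)$ to preserve the identity and inverses of $S$, and cancel $\Phi(g_ig_n,s)$ against $\Phi(g_ng_i,s)^{-1}$ via commutativity of $G$; your explicit justification of $\Phi(g_n,x^{-1})=\Phi(g_n,x)^{-1}$ is exactly the point the paper also flags. The only cosmetic difference is that ${\cal U}_n$ obtains the key directly as $\Phi(g_n,C_{n-1})$ in step 3 rather than via $D_n$, though your alternative route gives the same value.
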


\begin{proof}
This follows from the linearity of the action
$\Phi$. $\Phi (g_i, \Phi (g_n,s)) \Phi (g_n, D_i)= \Phi (g_ig_n, s)
\Phi \big( g_n, \Phi(g_i, s)^{-1} \Phi \big( \prod_{r=1}^{n-1}g_r,
s\big) \big)= \Phi \big( \prod_{r=1}^{n}g_r, s\big)$, since $\Phi
(g_i,e)=e$, $e$ being the neutral element in $S$, and $\Phi
(g_i,s)^{-1}=\Phi
(g_i,s^{-1})$, again by the
linearity of the action.
\end{proof}

\begin{ex} a) Given again a cyclic group $S$ of order $q$ generated by
$g$, the action $\Phi : \mathbb{N}^*\times S \rightarrow S$
defined by $\Phi (y,g^x)=(g^x)^y$ is clearly linear, so the above
argument applies.  $D_i$ assumes the form
$g^{\prod_{j=1}^{n-1}x_j}g^{-x_i}$.

b) If $\varepsilon$ is the group of points of an elliptic curve, then
$\varepsilon$ is a $\mathbb{Z}$-module via the linear action $\Phi
(k,P)=kP$ for every $k\in \mathbb{Z}$ and $P\in \varepsilon$. $D_i$
assumes the form $(\prod_{i=1}^{n-1}k_j)P-k_iP$.

c) Let us introduce an example where the preceding protocols can be
run over a module structure.
Let us recall from \cite{climent} the following  ring: \[
  E_{p}^{(m)}
  =
  \left\{
    [a_{ij}] \in \mathrm{Mat}_{m \times m} (\mathbb{Z})
    \ | \
    a_{ij} \in \mathbb{Z}_{p^{i}} \ \text{if} \ i \leq j,
    \ \text{and} \
    a_{ij} \in p^{i-j} \mathbb{Z}_{p^{i}} \ \text{if} \ i > j
  \right\},
\]
with addition and multiplication defined, respectively, as
follows
\begin{gather*}
  \big[
    a_{ij}
  \big]
  +
  \big[
    b_{ij}
  \big]
  =
  \big[
    (a_{ij}+b_{ij}) \bmod{p^{i}}
  \big], \\
  \big[
    a_{ij}
  \big]
  \cdot
  \big[
    b_{ij}
  \big]
  =
  \left[
    \left(
      \sum_{k=1}^{m} a_{ik} b_{kj}
    \right)
    \bmod{p^{i}}
  \right].
\end{gather*}
Here $\mathrm{Mat}_{m \times m} (\mathbb{Z})$ denotes the set of $m
\times m$ matrices with entries in $\mathbb{Z}$, and $p^{r}
\mathbb{Z}_{p^{s}}$ denotes the set $\left\{p^{r} u \ | \ u \in \{
0, \ldots, p^s-1\} \right\} \subset \mathbb{Z}$ for positive
integers $r$ and $s$. This ring is clearly non-commutative and its
product defines an action of the multiplicative semigroup
$E_p^{(m)}$ on the set $\mathbb{Z}_{p} \times \mathbb{Z}_{p^{2}}
\times \cdots \times \mathbb{Z}_{p^{m}}$. However, to ensure that
the key exchange works, we need that the elements in the semigroup
commute. In this non-commutative setting, this may be achieved by
considering that the selected elements in the semigroup $E_p^{(m)}$
are of the form $\displaystyle{\sum_{i=0}^{r}}C_iM^i$, such that for
every $i=0,\dots ,r$, $C_i$ is in the center $Z$ of $E_p^{(m)}$ and
$M\in E_p^{(m)}$ is a public element such that its set of powers is
large enough. In other words, if we denote the set of elements of
this form by $Z[M]$, then we are using for $G$ the multiplicative
subsemigroup $Z[M]$ of $E_p^{(m)}$.

From \cite[Theorem 2]{climent2} we can deduce conditions on the
public information that will be sent in order to prevent an
attacker from solving the SAP in the subsemigroup of $Z[M]$ given
by the center $Z$ of the ring, with cardinality
$p^m$ (cf. \cite{climent}). Thus if $M$ has high order, i.e. $M$ is
such that the least integer $n$ satisfying $M^{k+n}=M^k$ for
every sufficiently large $k$ is high, we will obtain that $Z[M]$ is
big enough.

Note that our aim in this paper is not to prove the hardness of the
SAP for this particular example, but rather to present protocols which
rely on the hardness of the SAP in a particular scenario once it has
been established there. The
non-commutative scenario in particular may present hidden
vulnerabilities, as was shown in recent
cryptanalyses, e.g. \cite{kamal,giacomo}, although these seem not to
directly apply in this setting. For example \cite{kamal} introduces
a cryptanalysis for the case of two users when the ring $E_p^{(m)}$
acts on itself, which can be countered by choosing $p$ and $m$
appropriately in order to avoid the existence of inverses
\cite{climent}. In the case of \cite{giacomo}, the cryptanalysis
requires building a system of equations, which does not seem to be
straightforward in this new setting of $Z[M]$. In \cite[Proposition
3.9]{mazeth} it is asserted that if the commutative semigroup has a
big number of invertible elements, then it is possible to develop a
square root attack to the SAP. Again we point out that $E_p^{(m)}$
could be chosen in order to avoid this attack.


 \end{ex}


Given that both $\Phi \Big( \displaystyle{\prod_{i=1}^{n-1}} g_i ,
s\Big)$ and $\Phi (g_n,s)$ are public we immediately get the
following.

\begin{prop}
If an adversary is able to solve the DHSAP, then she can get
the shared key in GSAP-3'.
\end{prop}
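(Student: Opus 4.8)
The plan is to mirror the reduction used in Proposition \ref{securityGSAP}: exhibit two elements of $S$ that a passive adversary observes during an execution of GSAP-3' and that play the roles of $y = \Phi(g,x)$ and $z = \Phi(h,x)$ in the DHSAP, so that the guaranteed output $\Phi(gh,x)$ coincides with the shared key. The parenthetical remark preceding the statement already pinpoints the two values to use, so the work is essentially to check that they are indeed public and that their DHSAP combination is the key.

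First I would identify the two relevant public values. Tracing through steps 1 and 2, the element $C_{n-1}$ broadcast by ${\cal U}_{n-1}$ satisfies $C_{n-1} = \Phi\big(\prod_{i=1}^{n-1} g_i, s\big)$, by repeated application of $\Phi(g_i g_j, s) = \Phi(g_i, \Phi(g_j, s))$ together with the commutativity of $G$. Second, in step 5 user ${\cal U}_n$ transmits his public key $\Phi(g_n, s)$ in the clear. Both are therefore available to the adversary, as is the agreed-upon element $s$.

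Next I would set up the DHSAP instance. Take $x = s$, $g = \prod_{i=1}^{n-1} g_i$ and $h = g_n$, so that $y := \Phi(g, s) = C_{n-1}$ and $z := \Phi(h, s) = \Phi(g_n, s)$ are exactly the two observed public values and $x = s$ is public by assumption. Feeding the triple $(x, y, z)$ to a DHSAP solver returns $\Phi(gh, s) = \Phi\big((\prod_{i=1}^{n-1} g_i)\, g_n, s\big) = \Phi\big(\prod_{i=1}^{n} g_i, s\big)$, which by Theorem \ref{theoGSAP3'} is precisely the common key.

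I expect no genuine obstacle here: once the two public values are correctly identified, the conclusion is an immediate instantiation of the DHSAP oracle, just as in Proposition \ref{securityGSAP}. The only point requiring a line of care is the verification that $C_{n-1}$ really equals $\Phi\big(\prod_{i=1}^{n-1} g_i, s\big)$, which is the same telescoping computation that underlies the correctness argument of the protocol.
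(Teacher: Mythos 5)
Your proof is correct and follows essentially the same route as the paper, which simply observes that $C_{n-1}=\Phi\big(\prod_{i=1}^{n-1}g_i,s\big)$ and $\Phi(g_n,s)$ are both public and concludes immediately; you have just spelled out the DHSAP instance $(s,C_{n-1},\Phi(g_n,s))$ explicitly. No issues.
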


Let us recall from \cite{maze} that given any $G$-action $\Phi$ on $S$,
we can easily define an ElGamal type of public key cryptosystem.
We define the following ElGamal type of protocol.

\begin{enumerate}
\item Alice and Bob publicly agree on an element $s\in S$.

\item Bob chooses $b\in G$ and computes $\Phi (b,s)$. Bob's private key is
$b$, his public key is $\Phi (b,s)$.

\item If Alice wants to send the message $m\in S$ to Bob, then she
gets Bob's public key $\Phi (b,s)$.

\item Alice chooses randomly $a\in G$ and computes $\Phi (a,s)$ and $\Phi (a, \Phi
(b,s))$.

\item Alice sends to Bob the pair $(c,d)=\big( \Phi (a,s), m\Phi (a, \Phi
(b,s) \big)$.

\item Bob recovers $m=d\Phi( b,c)^{-1}=m\Phi (a, \Phi
(b,s))\Phi( b, \Phi (a,s))^{-1}$, given that $S$ has a group structure.
\end{enumerate}

It can be easily observed that solving the DHSAP is
equivalent to breaking the preceding algorithm: if given the public
information $$(s, \Phi (a,s), \Phi (b,s), m\Phi (ab,s) )$$ one is able
to get $m$, then the input $(s, \Phi (a,s), \Phi (b,s), e)$, for
$e\in S$ the neutral element, produces $\Phi (ab,s)^{-1}$, which
solves the DHSAP. Conversely, given Bob's public key $\Phi
(b,s)$ and the pair $\big( \Phi (a,s), m\Phi (a, \Phi (b,s))
\big)$, one can use $\Phi (ab,s)$ from the DHSAP to
recover $m$.

\medskip

Now using the above we are able to show the security of GSAP-3'.

\begin{thm}\label{secgsapprime}
The group key that users derive as a result of GSAP-3' is
indistinguishable in polynomial time from a random value whenever
the corresponding Diffie-Hellman protocol induced by $\Phi$ for two
users also satisfies this property.
\end{thm}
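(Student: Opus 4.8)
The plan is to reduce the security of GSAP-3' to the indistinguishability of the two-party Diffie-Hellman key induced by $\Phi$, exactly as Theorem~\ref{thmsec} does for the one-sided protocols, but now exploiting the linear structure of $\Phi$ and the group structure of $S$. First I would catalogue precisely what an eavesdropper observes during an execution of GSAP-3': the ascending chain $C_1, \dots, C_{n-1}$ from step~1--2, the broadcast values $\{\Phi(g_n, D_1), \dots, \Phi(g_n, D_{n-1}), \Phi(g_n, D_n)\}$ together with $\Phi(g_n, s)$ from step~5, and the transmitted $D_i = \Phi(g_i,s)^{-1} C_{n-1}$ from step~4. The key claim is that each of these is either a publicly computable function of the others or has the form of a legitimate two-party Diffie-Hellman public value, so that the whole transcript carries no more distinguishing information than a single two-party instance does.

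The key step is to observe, using linearity, that every observed value can be rewritten in a normalized form involving only products of the $g_j$ acting on $s$. Since $D_i = \Phi(g_i, s)^{-1} \Phi(\prod_{r=1}^{n-1} g_r, s) = \Phi(g_i, s^{-1}) \, \Phi(\prod_{r=1}^{n-1} g_r, s)$ and $\Phi(g_n, D_i) = \Phi\big(g_n, \Phi(g_i, s^{-1}) \Phi(\prod_{r=1}^{n-1} g_r, s)\big) = \Phi(g_ng_i, s^{-1}) \, \Phi(g_n \prod_{r=1}^{n-1} g_r, s)$ by linearity, each broadcast component is a product of values of the form $\Phi(\prod_{j\in J} g_j, s^{\pm 1})$ with $J \subsetneq \{1,\dots,n\}$. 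This places the transcript inside (a linear image of) the set $V_{\Phi}^G(s,n,X)$ from Section~\ref{Secsec}, with the final key $K_{\Phi}^G(s,n,X) = \Phi(\prod_{i=1}^n g_i, s)$ as the only value involving the full product. I would then invoke the ElGamal equivalence established just above the theorem, which certifies that under the linear/group structure the two-party DHSAP instance is indistinguishable precisely when the induced two-party Diffie-Hellman key is, supplying the base case.

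With the base case in hand, I would run an induction on $n$ in the style of \cite[Theorem~1]{steiner2}, identical in structure to the argument underlying Theorem~\ref{thmsec}: assuming a distinguisher for the $n$-user key, one constructs a distinguisher for an $(n-1)$-user (and ultimately two-user) instance by embedding a challenge triple $(s, s_1, s_2)$ into the transcript, simulating the remaining public values using freshly chosen group elements, and using linearity to propagate the embedded challenge through the $D_i$ and $\Phi(g_n, D_i)$ without ever needing the secret exponents. A successful guess whether the target equals $\Phi(g_1g_2, s)$ or a random element then transfers through the reduction.

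The main obstacle I anticipate is the simulation in step~4--5: because the $D_i$ mix a secret $\Phi(g_i,s)^{-1}$ with the shared $C_{n-1}$, and the broadcast applies $g_n$ to all of them simultaneously, a naive embedding risks introducing correlations that a real execution would not have (the $D_i$ are not independent, being tied through the common $C_{n-1}$). The delicate point is to verify that linearity makes $\Phi(g_n, D_i)$ factor cleanly as a product of an independently-simulatable public part $\Phi(g_n\prod_{r=1}^{n-1}g_r, s)$ and a per-user part $\Phi(g_ng_i, s^{-1})$, so that the challenge can be injected into exactly one factor while the rest is generated honestly; confirming that this factorization yields a distribution statistically identical to a genuine transcript is where the real work lies, and it is precisely here that the group structure of $S$ (guaranteeing $\Phi(g_i,s)^{-1} = \Phi(g_i, s^{-1})$) is indispensable.
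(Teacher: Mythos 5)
Your proposal rests on the same two pillars as the paper's own proof: using linearity and the group structure of $S$ to factor each broadcast value as $\Phi(g_n,D_i)=\Phi(g_ng_i,s^{-1})\,\Phi\big(g_n,\Phi\big(\prod_{r=1}^{n-1}g_r,s\big)\big)$, and then recognizing the pairs $\big(\Phi(g_n,s),\Phi(g_n,D_i)\big)$ (equivalently $\big(\Phi(g_i,s),\Phi(g_n,D_i)\big)$) as ElGamal-type encryptions of the group key under the public keys $\Phi(g_i,s)$ (resp.\ under $\Phi(g_n,s^{-1})$ with the $g_i$ as randomness), whose security was shown just before the theorem to be equivalent to the DHSAP. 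Where you differ is in what comes after: the paper stops once the transcript is exhibited as a set of such encryptions and simply asserts that the result follows, whereas you propose to finish with a Steiner-style hybrid induction in the spirit of Theorem~\ref{thmsec} to handle the fact that the $n$ ciphertexts are correlated (same message, shared randomness $g_n$, plus the public chain $C_1,\dots,C_{n-1}$ and the $D_i$ themselves). That extra step is precisely what the paper leaves implicit, and it is where the genuine reduction work sits, so your version is the more careful one; just make sure the base of your induction is the single-pair ElGamal/DHSAP equivalence rather than a bare two-user key-exchange instance, and that the simulator in each hybrid reproduces the ascending chain consistently with the embedded challenge.
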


\begin{proof}
Given that both $C_{n-1}=\Phi \Big(
\displaystyle{\prod_{i=1}^{n-1}} g_i , s\Big)$ and $D_i=\Phi(g_i,
s)^{-1}C_{n-1}$ are public, an adversary is able to get all the
public values $\Phi (g_i, s)$, $i=1, \dots , n$. Now user ${\cal
U}_n$ sends the message $\{ \Phi (g_n, D_i) \}_{i=1}^{n-1}$ jointly
with $\Phi (g_n, s)$, in other words, due to linearity of $\Phi$,
user ${\cal U}_n$ sends a ``a family of pairs'', $i=1,\ldots,n$,

$$\Big(\Phi (g_n,s), \Phi (g_n, \Phi (g_i,s)^{-1}) \Phi \Big( g_n,
\Phi\Big(
\displaystyle{\prod_{j=1}^{n-1}} g_j , s\Big)\Big)
\Big),$$

\noindent which can be seen as a set of ElGamal encryptions of the message
$$\Phi \Big( \displaystyle{\prod_{i=1}^{n}} g_i , s\Big)=\Phi \Big(
g_n, \Phi\Big( \displaystyle{\prod_{i=1}^{n-1}} g_i , s\Big)\Big)$$

\noindent using the public keys $\Phi (g_i, s)$, $i=1, \dots ,
n$. Alternatively, one can consider the pairs

$$\Big(\Phi (g_i,s), \Phi (g_n, \Phi (g_i,s)^{-1}) \Phi \Big( g_n,
\Phi\Big(
\displaystyle{\prod_{j=1}^{n-1}} g_j , s\Big)\Big)
\Big),$$

\noindent which can also be seen, given the commutativity in $G$, as a set of ElGamal encryptions of the message
$$\Phi \Big( \displaystyle{\prod_{i=1}^{n}} g_i , s\Big)=\Phi \Big(
g_n, \Phi\Big( \displaystyle{\prod_{i=1}^{n-1}} g_i , s\Big)\Big)$$

\noindent using the public key $\Phi (g_n, s^{-1})$, and the $g_i$'s
as random numbers, for $i=1, \dots , n$.

Thus, as we pointed out above, given the equivalence of the security
of the ElGamal type of public key cryptosystem and the DHSAP,
the result follows.
\end{proof}

The rekeying process in this setting is analogous to that described
in Section \ref{Secsec} for protocols GSAP-2 and GSAP-3.

We first note that every user remembers the following keying
information.

$$\{ \Phi (g_n, D_1),\ldots , \Phi (g_n, D_{n-1}), \Phi (g_n, D_n) \}$$

In case of key caducity, user ${\cal U}_c$ for some $c=1, \dots , n$
chooses a new element $g'_c\in G$, computes a new key given by $\Phi
\Big( g'_c \displaystyle{\prod_{i=1}^{n}} g_i , s\Big)$ and his
keying information $\Phi( (g'_c)^2 g_c g_n, s)^{-1}\Phi \Big( g'_c
\displaystyle{\prod_{i=1}^{n}} g_i , s\Big)$ and broadcasts the
following message

\begin{gather*}
\{ \Phi(g'_c, \Phi (g_n, D_1)),\ldots ,\Phi( (g'_c)^2g_cg_n,
s)^{-1}\Phi \Big( g'_c, \Phi\Big(
\displaystyle{\prod_{i=1}^{n}} g_i , s\Big)\Big), \dots ,\\
\Phi(g'_c, \Phi (g_n, D_{n-1})), \Phi(g'_c, \Phi (g_n, D_n)) \},
\end{gather*}

\noindent jointly with the value $\Phi (g'_c,\Phi (g_n,s))$. User
${\cal U}_c$ changes his private information to $g_cg'_c$.

In case rekeying is due to some user leaving the group, then the
corresponding value is omitted in the above message.

Finally, let us assume that ${\cal U}_{n+1}$ joins the group. The
process corresponds in this case to something similar to a ``double
rekeying''as above. First, ${\cal U}_c$ sends to ${\cal U}_{n+1}$

\begin{gather*}
\Big\{ \Phi(g'_c, \Phi (g_n, D_1)),\ldots ,\Phi( (g'_c)^2g_cg_n,
    s)^{-1}\Phi \Big( g'_c, \Phi\Big(
    \displaystyle{\prod_{i=1}^{n}} g_i , s\Big)\Big), \dots ,\\
    \Phi(g'_c, \Phi (g_n, D_{n-1})), \Phi(g'_c, \Phi (g_n, D_n)) , \Phi
    \Big( g'_c, \Phi\Big(
\displaystyle{\prod_{i=1}^{n}} g_i , s\Big)\Big)\Big\}
\end{gather*}

\noindent jointly with the value $\Phi (g'_c,\Phi (g_n,s))$. Then,
${\cal U}_{n+1}$ broadcasts a rekeying message given by

\begin{gather*}
\Big\{ \Phi(g_{n+1}g'_c, \Phi (g_n, D_1)),\ldots ,\Phi(
    g_{n+1}(g'_c)^2g_cg_n, s)^{-1}\Phi \Big( g'_c, \Phi\Big(
    \displaystyle{\prod_{i=1}^{n+1}} g_i , s\Big)\Big), \dots ,\\
    \Phi(g_{n+1}g'_c, \Phi (g_n, D_{n-1})), \Phi(g_{n+1}g'_c, \Phi (g_n,
    D_n)) ,\\
    \Phi(g_{n+1}^2g'_cg_n,s)^{-1} \Phi \Big( g'_c, \Phi\Big(
\displaystyle{\prod_{i=1}^{n+1}} g_i , s\Big)\Big) \Big\}
\end{gather*}

\noindent jointly with the value $\Phi (g_{n+1}g'_cg_n, s)$.

Security of these processes is shown with a similar argument as in
Theorem~\ref{secgsapprime}.

\bigskip

A more symmetrical use of linear actions is the following protocol,
which decreases the number of rounds to just 2, but which
is only applicable in some cases.

\begin{protocol}[GSAP-4]
Users agree on an element $s$ in a finite abelian semigroup $S$, a
finite abelian semigroup $G$, and a linear $G$-action $\Phi$ on $S$.
For every $i=1, \dots ,n$, the user ${\cal U}_i$ holds a private
element $g_i\in G$.

\begin{enumerate}
\item For every $i=1, \dots ,n$, user ${\cal U}_i$ makes public
$\Phi (g_i,s)=g_i\cdot s$.

\item For some $j=1,\dots ,n$, user ${\cal U}_j$ computes and makes
public $$D_i=\Phi \Big( g_j, \prod_{r\not=j,i} \Phi(g_r,s)
\Big),  \ i\not= j, \ i=1, \dots ,n.$$

\item For every $i=1, \dots, n$, $i\not=j$, user ${\cal U}_i$ computes
$D_i\Phi (g_i, \Phi (g_j,s))$. User ${\cal U}_j$ computes $\Phi(
g_j, \big( \prod_{r\not=j} \Phi(g_r,s) \big)$.
\end{enumerate}
\end{protocol}

\begin{thm}\label{theoGSAP4}
After protocol GSAP-4, the users ${\cal U}_1, \dots ,{\cal U}_n$
share a common key given by $\Phi (g_j,\prod_{r\not=j} \Phi(g_r,s))$.
\end{thm}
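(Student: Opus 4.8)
The plan is to verify directly that every user ends step 3 holding the same element, namely $K = \Phi(g_j, \prod_{r\neq j}\Phi(g_r,s))$, by unwinding the quantity each user forms and using only the two properties at hand: the linearity of $\Phi$ and the commutativity of $G$ together with the action axiom $\Phi(gh,s)=\Phi(g,\Phi(h,s))$. The designated user $\mathcal{U}_j$ is immediate, since by construction $\mathcal{U}_j$ computes exactly $\Phi(g_j,\prod_{r\neq j}\Phi(g_r,s))$, which is $K$ verbatim. The content of the proof therefore lies entirely in showing that each remaining user $\mathcal{U}_i$, $i\neq j$, recovers this same value.

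For $i\neq j$, user $\mathcal{U}_i$ forms $D_i\,\Phi(g_i,\Phi(g_j,s))$, where $D_i=\Phi(g_j,\prod_{r\neq j,i}\Phi(g_r,s))$. First I would rewrite the second factor using the action axiom and the commutativity of $G$: $\Phi(g_i,\Phi(g_j,s))=\Phi(g_ig_j,s)=\Phi(g_jg_i,s)=\Phi(g_j,\Phi(g_i,s))$. This brings both factors under a common outer application of $g_j$. Next I would invoke the linearity of $\Phi$ in its second argument to merge them, obtaining $\Phi\big(g_j,(\prod_{r\neq j,i}\Phi(g_r,s))\cdot\Phi(g_i,s)\big)$. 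Finally, since $S$ is abelian the missing factor $\Phi(g_i,s)$ reabsorbs into the product, $(\prod_{r\neq j,i}\Phi(g_r,s))\cdot\Phi(g_i,s)=\prod_{r\neq j}\Phi(g_r,s)$, so the whole expression collapses to $\Phi(g_j,\prod_{r\neq j}\Phi(g_r,s))=K$, as required.

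There is no real obstacle here; the argument is a short chain of equalities, essentially the same computation already used to establish Theorem~\ref{theoGSAP3'}. The only points needing a word of care are the implicit extension of linearity from a single product $\Phi(g,ss')=\Phi(g,s)\Phi(g,s')$ to an arbitrary finite product $\Phi(g,\prod_r s_r)=\prod_r\Phi(g,s_r)$, which follows by a trivial induction, and the fact that these products are unambiguous precisely because $S$ is assumed abelian. Thus the hypothesis that $S$ be a finite \emph{abelian} semigroup, rather than merely a semigroup, is exactly what the reordering in the last step requires.
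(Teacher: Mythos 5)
Your argument is correct and is essentially identical to the paper's own proof: the same chain $\Phi(g_i,\Phi(g_j,s))=\Phi(g_ig_j,s)=\Phi(g_jg_i,s)=\Phi(g_j,\Phi(g_i,s))$ followed by merging via linearity and reabsorbing the factor $\Phi(g_i,s)$ into the product. Your added remarks on extending linearity to finite products and on where the abelianness of $S$ is actually used are details the paper leaves implicit, but they do not change the route.
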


\begin{proof}
For every $i=1, \dots ,n$, $i\not= j$,

$$\begin{array}{rl} D_i\Phi (g_i, \Phi (g_j,s)) & =\Phi\big( g_j, \prod_{r\not=j,i}
\Phi(g_r,s) \big) \Phi (g_i, \Phi (g_j,s)) \\
& = \Phi\big( g_j, \prod_{r\not=j,i} \Phi(g_r,s) \big) \Phi
(g_ig_j,s) \\
& = \Phi\big( g_j, \prod_{r\not=j,i} \Phi(g_r,s) \big) \Phi
(g_jg_i,s) \\
& = \Phi\big( g_j, \prod_{r\not=j,i} \Phi(g_r,s) \big) \Phi
(g_j,\Phi (g_i,s)) \\
& = \Phi \big(g_j,\prod_{r\not=j} \Phi(g_r,s)\big).
\end{array}$$
\end{proof}

\begin{ex} a) Let us consider again a cyclic group $S$ of order $q$
generated by $g$, with the action $\Phi : \mathbb{N}^* \times S
\rightarrow S$ given by $\Phi (y,g^x)=(g^x)^y$.
Then GSAP-4 implies sharing a key of the form
$K = g^{k_j\sum_{r=1,r\not=j}^nk_r}$.
An adversary can access the messages
$$D_i=\Phi \Big( g_j,  \prod_{r\not=j,i} \Phi(g_r,s) \Big),  \ i\not= j, \ i=1, \dots ,n,$$
from which she can compute $\prod_{r=1,r \ne j}^{n} D_r=K^{n-2}$. In the
case where the order $q$ of $S$ is known, the adversary can now
recover the key $K$ from $K^{n-2}$ by inverting $n-2$ modulo $q$.
This is in particular the case where $S$ is a subgroup of a finite
field, or where it is the group of points of an elliptic curve.
However, we can avoid this weakness by adding some authentication
information as is done in~\cite{ateniese}.

b) Let $m=pq$ with $p$ and $q$ two large primes and let $G =
\mathbb{Z}_{(p-1)(q-1)}^*$. Then the action $\Phi :G\times
\mathbb{Z}_m \rightarrow \mathbb{Z}_m$ given by $\Phi (x,g)=g^x \
\mbox{mod} \ m$ shows an example where the above attack cannot be
developed unless the adversary is able to factorize $m$. The shared
key in this case is of the form $g^{x_j\sum_{i=1,i\not=j}^nx_i} \
\mbox{mod} \ m$.

c) We recall that a semiring $R$ is a semigroup with respect to both
addition and multiplication and the distributive laws hold. It is
also understood that a semiring is commutative with respect to
addition and the existence of neutral elements is not required,
although some authors do require it. Then given a semiring $R$, a left
$R$-semimodule $M$ is an abelian semigroup with an action $\Phi
:R\times M \rightarrow M$, $\Phi(r,m) = rm$, satisfying $r(sm)=(rs)m,
\ (r+s)m=rm+sm$ and $r(m+n)=rm+rn$ for all $r,s\in R$ and $m,n\in M$.
Thus, based on the previous two examples, we can assert in general
that any semimodule $S$ over a semiring $R$ fits with GSAP-4 and the
shared key is of the form $k_j(\sum_{r=1,r\not=j}^nk_r)s$ for $k_i\in
R$, $i=1, \dots ,n$ private and $s\in S$ public.
\end{ex}

\begin{rem}
Due to the attack shown in example a), the hardness of the
Diffie-Hellman problem is not enough to show security in this case.
We leave it as an open question whether the hardness of factoring would be enough to do so.
\end{rem}

\begin{rem}
We can also give protocols based on two-sided actions. To this end
we recall that given a semiring $S$, right $S$-semimodules are
defined dually to left ones. Then, given two semirings $R$ and $S$,
an $(R,S)$-bisemimodule $M$ is both a left $R$-semimodule and a
right $S$-semimodule such that $(rm)s=r(ms)$ for every $r\in R$,
$m\in M$ and $s\in S$.

Now we are able to provide key exchange protocols similar to those
given in the previous sections based on two-sided linear actions
over a $(R,S)$-bisemimodule $M$. 
In the case of
GSAP-3', since we need the existence of inverses with respect to
addition in $M$, we may suppose that $M$ has an $(R,S)$-bimodule
structure for some rings $R$ and $S$.
\end{rem}

\section{Appendix  GSAP1}

\begin{thm} \label{theoGSAP1}
After protocol GSAP-1, users ${\cal U}_1, \dots ,{\cal U}_n$ agree
on the common key $ \Phi \Big(\prod_{j=1}^{n}g_j  , s \Big )$.
\end{thm}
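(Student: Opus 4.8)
The plan is to reduce everything to two facts about the action: the composition law $\Phi(gh,s)=\Phi(g,\Phi(h,s))$ that is built into the definition of a $G$-action, and the commutativity of $G$. I would first record closed forms for the two families of quantities produced by the protocol, namely the $C_i$ and the $f_j^k$, and then simply evaluate what each user computes in steps~2 and~4, checking that in every case the result is $\Phi\big(\prod_{j=1}^n g_j,s\big)$.

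The first closed form is immediate from step~1: by definition $C_i=\Phi\big(\prod_{l=1}^i g_l,s\big)$ for $0\le i\le n-1$, with $C_0=s$. The substantive step is to show, by descending induction on $k$, that
$$f_j^k=\Phi\Big(\prod_{l=k}^n g_l,\ C_{j-1}\Big),\qquad 2\le k\le n,\ 1\le j\le k-1.$$
The base case $k=n$ is exactly the definition $f_j^n=\Phi(g_n,C_{j-1})$. For the inductive step I would apply the rule $f_j^k=\Phi(g_k,f_j^{k+1})$ together with the composition law: $\Phi(g_k,f_j^{k+1})=\Phi\big(g_k,\Phi(\prod_{l=k+1}^n g_l,C_{j-1})\big)=\Phi\big(g_k\prod_{l=k+1}^n g_l,C_{j-1}\big)=\Phi\big(\prod_{l=k}^n g_l,C_{j-1}\big)$. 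Note that this step uses only composition, not commutativity.

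Finally I would substitute into the user computations. User ${\cal U}_n$ computes $\Phi(g_n,C_{n-1})=\Phi\big(\prod_{l=1}^n g_l,s\big)$ directly, which already gives the asserted key. For $1\le k\le n-1$, user ${\cal U}_k$ computes $\Phi(g_k,f_k^{k+1})$; expanding $f_k^{k+1}$ with the closed form (legitimate since the upper index $k+1$ lies in the allowed range $\{2,\dots,n\}$) and applying the composition law twice yields $\Phi\big(g_k\prod_{l=k+1}^n g_l\cdot\prod_{l=1}^{k-1}g_l,\ s\big)$. Here is the single place where commutativity of $G$ is used: it lets me reorder the product $g_k g_{k+1}\cdots g_n g_1\cdots g_{k-1}$ into $\prod_{l=1}^n g_l$, giving the common key. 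I expect no genuine obstacle here; the only care required is the bookkeeping of the double index in $f_j^k$ and verifying that the index ranges line up so that the closed form is applicable at $(j,k)=(k,k+1)$ for each $k=1,\dots,n-1$.
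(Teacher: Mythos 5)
Your proposal is correct and follows essentially the same route as the paper: the paper's key lemma states $f_j^{n-i}=\Phi\big((\prod_{r=n-i}^{n}g_r)(\prod_{r=1}^{j-1}g_r),s\big)$ and proves it by induction on $i$, which is exactly your descending induction on $k=n-i$ (you merely leave $C_{j-1}$ unexpanded in the closed form), and both arguments then specialize to $f_k^{k+1}$ and apply $g_k$ with commutativity to obtain $\Phi\big(\prod_{l=1}^{n}g_l,s\big)$. Your index bookkeeping checks out, so there is nothing to add.
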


\begin{proof}
    User ${\cal U}_n$ computes $$\Phi (g_n ,
C_{n-1} )= \Phi \Big( g_n,\Phi \Big (\prod_{j=1}^{n-1}g_j , s\Big)
\Big) = \Phi \Big (\prod_{j=1}^{n}g_j ,  s\Big).$$

Let us show now that the rest of the users recover exactly the same key. For
$k=1,\dots ,n-1$, user ${\cal U}_k$ computes $\Phi ( g_k ,
f_{k}^{k+1})$.

It is straightforward to show that for every $i=1, \dots ,n-2$, $j=1, \dots ,n-i-1$, the following equality
holds:

$$f_j^{n-i} = \Phi \Big( \Big ( \prod_{r=n-i}^{n}g_r \Big ) \Big
(\prod_{r=1}^{j-1}g_r \Big ) , s\Big),$$

\noindent with the empty product being equal to 1.

We then have:

$$\begin{array}{rl} f_k^{k+1} & = f_k^{n-(n-k-1)} \\
&= \Phi \Big ( \Big (\prod_{r=k+1}^{n}g_r \Big ) \Big
(\prod_{r=1}^{k-1}g_r \Big ) , s \Big) \\
& = \Phi \Big (\prod_{r=1; r \neq k}^{n}g_r ,  s \Big).
\end{array}$$

Thus, user ${\cal U}_k$ computes $$\Phi ( g_k , f_k^{k+1} )= \Phi
\Big( g_k ,\Phi \Big(\prod_{r=1; r \neq k}^{n}g_r , s \Big) \Big)=
\Phi \Big (\prod_{r=1}^{n}g_r  s\Big),$$

\noindent as we wanted to show.
\end{proof}

\section{Appendix GSAP2}

\begin{thm} \label{theoGSAP2}
After protocol GSAP-2, users ${\cal U}_1, \dots ,{\cal U}_n$ agree
on a common key given by $ \Phi \Big ( \displaystyle{
\prod_{r=1}^{n}}g_r , s\Big )$.
\end{thm}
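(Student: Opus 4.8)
The plan is to first derive a closed form for the intermediate quantities $C_j^i$ generated in Step 1, and then feed these into the broadcast values $f_i^n$ of Step 3 to verify that every user's final computation in Step 4 collapses to $\Phi\big(\prod_{r=1}^{n}g_r,s\big)$, which I will write $K$ for short.

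First I would prove, by induction on $i$, that for $1\le j\le i$ one has the diagonal identity $C_i^i=\Phi\big(\prod_{r=1}^{i}g_r,s\big)$, while for $1\le j<i$ one has the off-diagonal identity $C_j^i=\Phi\big(\prod_{r=1,\,r\ne i-j}^{i}g_r,s\big)$. The base cases $C_0^0=s$ and $C_1^1=\Phi(g_1,s)$ are immediate. For the inductive step I would split according to the defining recursion. The diagonal case $C_i^i=\Phi(g_i,C_{i-1}^{i-1})$ uses the action identity $\Phi(g_i,\Phi(h,s))=\Phi(g_ih,s)$ to append $g_i$ to the full product at level $i-1$. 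The generic case $C_j^i=\Phi(g_i,C_{j-1}^{i-1})$ with $2\le j\le i-1$ applies the off-diagonal hypothesis, whose missing index is $(i-1)-(j-1)=i-j$, so that $g_i$ is adjoined to a product already missing $g_{i-j}$. The boundary case $C_1^i=\Phi(g_i,C_{i-2}^{i-2})$ applies the diagonal hypothesis at level $i-2$, producing exactly the product over $\{1,\dots,i\}\setminus\{i-1\}$, which matches the stated formula at $j=1$. Commutativity of $G$ is what lets me freely reorder the factors throughout.

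With this formula in hand, Step 2 is immediate: $\Phi(g_n,C_{n-1}^{n-1})=\Phi\big(g_n,\Phi(\prod_{r=1}^{n-1}g_r,s)\big)=K$, so $\mathcal{U}_n$ already holds the claimed key. For the remaining users I would substitute the closed form into the broadcast values. For $1\le i\le n-2$, the value $C_{n-1-i}^{n-1}$ is off-diagonal with missing index $(n-1)-(n-1-i)=i$, giving $f_i^n=\Phi(g_n,C_{n-1-i}^{n-1})=\Phi\big(\prod_{r=1,\,r\ne i}^{n}g_r,s\big)$; for $i=n-1$, the diagonal value $C_{n-2}^{n-2}$ yields $f_{n-1}^n=\Phi\big(\prod_{r=1,\,r\ne n-1}^{n}g_r,s\big)$. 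In either case the product inside $f_i^n$ is exactly $\prod_{r=1}^{n}g_r$ with the single factor $g_i$ removed, so $\mathcal{U}_i$'s Step 4 computation gives $\Phi(g_i,f_i^n)=\Phi\big(g_i\prod_{r=1,\,r\ne i}^{n}g_r,s\big)=K$, again by commutativity and the action identity. The extra element $f_n^n=C_{n-1}^{n-1}=\Phi\big(\prod_{r=1}^{n-1}g_r,s\big)$ plays no role in recovery, consistent with the earlier remark.

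The computations themselves are routine once the indexing is pinned down; the only genuine obstacle is bookkeeping. The three-way split in the recursion for $C_j^i$ (the $j=1$ boundary, the generic interior $2\le j\le i-1$, and the $j=i$ diagonal) must be matched correctly against the three-part description of the broadcast ($f_i^n$ for $i\le n-2$, the separate $f_{n-1}^n$, and the unused $f_n^n$), and a single off-by-one slip in tracking the missing index would destroy the cancellation. I would therefore isolate the closed form for $C_j^i$ as an explicit intermediate lemma and verify the index arithmetic $(i-1)-(j-1)=i-j$ and $(n-1)-(n-1-i)=i$ before concluding.
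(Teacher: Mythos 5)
Your proof is correct and follows essentially the same route as the paper's: both establish the diagonal identity $C_j^j=\Phi\big(\prod_{r=1}^{j}g_r,s\big)$ and the off-diagonal closed form $C_j^i=\Phi\big(\prod_{r=1,\,r\neq i-j}^{i}g_r,s\big)$ (the paper writes this as $C_s^{i+s}$ with missing index $i$), then substitute into the broadcast values $f_i^n$ and conclude by commutativity. The only cosmetic difference is that you run a single induction on the superscript handling the diagonal, interior and $j=1$ boundary cases together, whereas the paper proves the diagonal lemma separately and then inducts on the offset $s$ with the missing index fixed.
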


\begin{proof}
User ${\cal U}_n$ computes $\Phi (g_n
,C_{n-1}^{n-1} )= \smash{\Phi \Big( g_n, \Phi \Big( \displaystyle{
\prod_{r=1}^{n-1}}g_r  , s \Big)\Big)} = \smash{\Phi \Big(
\displaystyle{ \prod_{r=1}^{n}}g_r  , s\Big)}$.

Now, let us show that $f_i^n =  \Phi \Big ( \displaystyle{
\prod_{r=1; r \neq i}^{n}}g_i , s\Big)$ for $i = 1, \ldots, n$.

To do so, we will prove that \ \ $C_s^{i+s} = \smash{\Phi \Big
(\displaystyle{ \prod_{r=1;r \neq i}^{i+s}}g_r  , s\Big)}$ for
$s = 1, \ldots, n-2$ and $i = 1, \ldots, n-s-1$.

Let us make induction on $s$. For $s=1$, we get by definition that
$C_1^{i+1} = \Phi (g_{i+1} , C_{i-1}^{i-1})$. Now it is clear that $C_j^j = \Phi \Big ( \displaystyle{ \prod_{r=1}^{j}}g_r  , s \Big)$
for every $j = 1, \dots ,n-1$. Therefore

$$C_1^{i+1} = \Phi (g_{i+1} , C_{i-1}^{i-1}) = \Phi \Big( g_{i+1},\Phi \Big ( \displaystyle{ \prod_{r=1}^{i-1}}g_r , s \Big) \Big)=
\Phi \Big ( \displaystyle{ \prod_{r=1; r \neq i}^{i+1}}g_r ,
s\Big).$$

Suppose now that $C_{s-1}^{i+s-1} = \Phi \Big (
\displaystyle{\prod_{r=1; r \neq i}^{i+s-1}}g_r , s\Big)$. Then, by
definition,

$$C_s^{i+s} = \Phi ( g_{i+s} , C_{s-1}^{i+s-1} )= \Phi \Big( g_{i+s} ,\Phi \Big ( \displaystyle{ \prod_{r=1; r \neq i}^{i+s-1}}g_r
 , s \Big )\Big) = \Phi \Big ( \displaystyle{ \prod_{r=1; r \neq
i}^{i+s}}g_r  , s\Big).$$

Thus  $C_{n-1-i}^{n-1} = C_{n-1-i}^{i+n-1-i} = \Phi \Big (
\displaystyle {\prod_{r=1; r \neq i}^{i+(n-1-i)}} g_r , s\Big) =
\Phi \Big ( \displaystyle{ \prod_{r=1; r \neq i}^{n-1}}g_r ,
s\Big)$.

Therefore $$f_i^n = \Phi \Big(g_n , \Phi \Big( \displaystyle{
\prod_{r=1; r \neq i}^{n-1}} g_r , s \Big)\Big)=  \Phi \Big (
\displaystyle{ \prod_{r=1; r \neq i}^{n}} g_r , s\Big),$$

\noindent and so user ${\cal U}_i$ computes $\Phi (g_i , f_i^n) =
\smash{\Phi \Big( g_i , \Phi \Big ( \displaystyle{ \prod_{r=1; r \neq i}^{n}}g_r
,s \Big)\Big)} = \smash{\Phi \Big ( \displaystyle{ \prod_{r=1}^{n}}g_r ,
s\Big)}$, as we wanted to show.
\end{proof}

\end{document}